\newcommand{\bra}[1]{\langle #1\vert}
\newcommand{\ket}[1]{\vert#1\rangle}
\newcommand{\poly}{\textrm{poly}}
\newcommand{\eqref}[1]{Eq.~(\ref{#1})}
\newcommand{\vertiii}[1]{{\left\vert\kern-0.25ex\left\vert\kern-0.25ex\left\vert #1 
    \right\vert\kern-0.25ex\right\vert\kern-0.25ex\right\vert}}
\newtheorem{thm}{Theorem}
\newtheorem{lem}[thm]{Lemma}
\begin{document}
\date{\today}
\title{Computational complexity of time-dependent density functional theory}
\author{J. D. Whitfield$^{1,*}$, M.-H. Yung$^{2,3}$, D. G.  Tempel$^{3}$, S. Boixo$^4$, and A. Aspuru-Guzik$^3$}
 
\address{$^1$Vienna Center for Quantum Science and Technology\\University of Vienna, Department of Physics, Boltzmanngasse 5, Vienna, Austria 1190}
\address{$^2$Center for Quantum Information, Institute for Interdisciplinary Information Sciences,
Tsinghua University, Beijing, 100084, P. R. China}
\address{$^3$Department of Chemistry and Chemical Biology, Harvard University, Cambridge, Massachusetts 02138, USA}
\address{$^4$ Google, Venice Beach, CA 90292, USA}
\address{Department of Chemistry and Chemical Biology, Harvard University, Cambridge, Massachusetts 02138, USA}
\address{$^*$ Corresponding author: \ead{JDWhitfield@gmail.com}}

\begin{abstract}
Time-dependent density functional theory (TDDFT) is rapidly emerging as a premier method 
for solving dynamical many-body problems in physics and chemistry. The mathematical foundations of TDDFT 
are established through the formal existence of a fictitious non-interacting system (known as the Kohn-Sham system), which can reproduce the one-electron reduced probability density of the actual system.  We build upon these works and show that on the interior of the domain of existence, the Kohn-Sham system can be efficiently obtained given the time-dependent density.   We introduce a $V$-representability parameter which diverges at the boundary of the existence domain and serves to quantify the numerical difficulty of constructing the Kohn-Sham potential.   
For bounded values of $V$-representability, we present a polynomial time quantum algorithm 
to generate the time-dependent Kohn-Sham potential with controllable error bounds. 
\end{abstract}

\maketitle


Despite the many successes achieved so far, the major challenge of time-dependent density functional theory (TDDFT) is to find good approximations to the Kohn-Sham potential, $\hat V^{KS}$, for a non-interacting system. This is a notoriously difficult problem and leads to failures of TDDFT in situations involving charge-transfer excitations~\cite{Dreuw00}, conical intersections~\cite{Tapavicza08} or photoionization~\cite{Petersilka99}. Naturally, this raises the following question: \emph{what is the complexity of generating of the necessary potentials?}  We answer this question and show that access to a universal quantum computer is sufficient.

The present work, in addition to contributing to on-going research about the foundations of TDDFT, 
is the latest application of quantum computational complexity theory to a growing list of problems in the physics and chemistry community~\cite{Whitfield13}. Our result emphasizes that the foundations of TDDFT are not devoid of computational considerations, even theoretically. Further, our work highlights the utility of reasoning using hypothetical quantum computers to classify the computational complexity of problems. The practical implications are that, within the interior of the domain of existence, it is efficient to compute the necessary potentials using a computer with access to an oracle capable of polynomial-time quantum computation.

Quantum computers are devices which use quantum systems themselves to store and process data. On the one hand, one of the selling points of quantum computation is to have efficient algorithms for calculations in quantum chemistry and quantum physics~\cite{Brown10,Kassal11,Yung12}. On the other hand, in the worst case, quantum computers are not expected to solve all NP (non-deterministic polynomial time) problems efficiently~\cite{Bennett97}. Therefore, it is an on-going investigation into when a quantum computer would be more useful than a classical computer. Our current result points towards evidence of computational differences between quantum computers and classical computers.  In this way, we provide additional insights to one of the driving questions of information and communication processing in the past decades concerning practical application areas of quantum computing.  

Our findings are in contrast to a previous result by Schuch and Verstraete~\cite{Schuch09},
which showed that, in the worst-case, polynomial approximation to the universal functional of ground state density functional theory (DFT) is likely to be impossible even with a quantum computer.  
Remarkably, this discrepancy between the computational difficulty of TDDFT and ground state DFT is often reversed in practice where for common place systems encountered by physicists and chemists, TDDFT calculations are often more challenging than DFT calculations. Therefore, 
our findings provide more reasons why quantum computers should be built.

The practical utility of our results can be understood in multiple ways.  First, we have demonstrated a new theoretical understanding of TDDFT highlighting its relative simplicity as compared to ground state DFT computations. Second, we have introduced a $V$-representability parameter, which similar to the condition number of a matrix,  diverges as the Kohn-Sham formalism becomes less applicable. Finally, for analysis purposes, it is often useful to know what the exact Kohn-Sham potential looks like in order to compare and contrast approximations to the exchange-correlation functionals.  However, this has been limited to small dimensional or model systems and our results show that, with a quantum computer, one could perform such exploratory studies for larger systems.

\medskip
\section{Background}
\subsection{Time-dependent Kohn-Sham systems}
To introduce TDDFT and its Kohn-Sham formalism, it is instructive to view the Schr\"odinger equation as a map
\cite{Maitra10} 
\begin{equation}
	\{\hat V(t),\Psi(t_0)\}\mapsto\{n(t),\Psi(t)\}.
	\label{eq:SEmap}
\end{equation}
The inputs to the map are an initial state of $N$ electrons, $\Psi{(t=t_0)}$,  and a Hamiltonian, $\hat H(t)=\hat T+\hat W+\hat V{(t)}$ that contains a kinetic-energy term, $\hat T$, a two-body interaction term such as the Coulomb potential, $\hat W$, and a scalar time-dependent potential, $\hat V(t)$. The outputs of the map are the state at later time, $\Psi(t)$ and the one-particle probability density normalized to $N$ (referred to as the density), 
\begin{eqnarray}
\langle \hat n(x) \rangle _ {\Psi(t)}&=&\bra{\Psi(t)}\hat n(x) \ket{\Psi(t)}\nonumber\\
&=&N\int  |\Psi(x,x_2,...,x_N;t)|^2 dx_2...dx_N.
\end{eqnarray}

TDDFT is predicated on the use of the time-dependent density as the fundamental variable and all observables and properties are functionals of the density.
The crux of the theoretical foundations of TDDFT is an inverse map which has as inputs the density at all times and the initial state. It outputs the potential and the wave function at later times $t$,
\begin{equation}
	\{\langle \hat n\rangle_{\Psi(t)},\Psi(t_0)\}\mapsto\{\hat V(t),\Psi(t)\}.
	\label{eq:RGmap}
\end{equation}
This mapping exists via the Runge-Gross theorem \cite{Runge84} which shows that, apart from a gauge degree of freedom represented by spatially homogeneous variations, the potential is bijectively related to the density. However, the problem of time-dependent simulation has not been simplified; the dimension of the Hilbert space scales exponentially with the number of electrons due to the two-body interaction $\hat W$. As a result, the time-dependent Schr\"odinger equation quickly becomes intractable to solve with controlled precision on a classical computer.

Practical computational approaches to TDDFT rely on constructing the non-interacting time-dependent Kohn-Sham potential. If at time $t$ the density of a system described by potential and wave function, $\{\hat V(t),\Psi(t)\}$, is $\langle \hat n\rangle_{\Psi(t)}$, then the non-interacting Kohn-Sham system ($\hat W=0$) reproduces the same density but using a different potential, $\hat V^{KS}$. The key difficulty of TDDFT is obtaining the time-dependent Kohn-Sham potential. 

Typically, the Kohn-Sham potential is broken into three parts: $\hat V^{KS}=\hat V+\hat V^H+\hat V^{xc}$.  The first potential is the external potential given in the problem specification and the second is the Hartree potential $V^H(x,t)=\int  n(x',t)|x-x'|^{-1} d^3x'$. The third is the exchange-correlation potential and requires an approximation to be specified wherein lies the difficulty of the Kohn-Sham scheme.  In this article, we discuss how difficult approximating the full potential is but we make note that only the exchange-correlation is unknown. While we discuss the computation of the full Kohn-Sham potential from a given external potential and initial density, we will not construct an explicit functional for the exchange-correlation potential.

The route to obtaining the Kohn-Sham potentials we focus on is the evaluation of the map,
\begin{equation}
	\{\langle \hat n \rangle_{\Psi(t)}, \Phi(t_0)\}\mapsto\{\hat V^{KS}(t),\Phi(t)\}.
	\label{eq:KSmap}
\end{equation}
Here, the wave function of the Kohn-Sham system, $\Phi(t)=\mathcal A [\phi^{1}{(t)}\phi^{2}{(t)}...\phi^{N}{(t)}]$, is an anti-symmetric combination of single particle wave functions, $\phi^i(t)$, such that for all times $t$, the Kohn-Sham density, $n^{KS}{(t)}=\langle \hat n\rangle_{\Phi(t)}=\sum_{i=1}^N |\phi^i{(t)}|^2$, matches the interacting density $\langle\hat n\rangle_{\Psi(t)}$.  If such a map exists, we call the system $V$-representable while implicitly referring to non-interacting $V^{KS}$-representablity.

As the map in \eqref{eq:KSmap} is foundational for TDDFT implementations based on the Kohn-Sham system, there are many articles~\cite{Leeuwen99,Baer08,Li08,Ruggenthaler11,Ruggenthaler12,Farzanehpour12} examining the existence of such a map. Instead of attempting to merely prove the existence of the Kohn-Sham potential, we will explore the limits on the efficient computation of this map and go beyond the scope of the previous works by addressing questions from the vantage of computational complexity.

The first approach to the Kohn-Sham inverse map found in \eqref{eq:KSmap}, was due to van Leeuwen~\cite{Leeuwen99} who constructed a Taylor expansion in $t$ of the Kohn-Sham potential to prove its existence.  The construction relied on the continuity equation, $-\nabla\cdot\hat  j=\partial_t \hat n$, and the Heisenberg equation of motion for the density operator to derive the local force balance equation at a given time $t$:
\begin{equation}
	\partial_t^2\hat{n}-i[\hat W,\partial_t \hat n]=-\nabla\cdot(\hat n\nabla  V)+\hat Q,
	\label{eq:vL}
\end{equation}
where $\hat Q=i[\hat T,\partial_t \hat n]$ is the momentum-stress tensor. 
In the past few years, several results have appeared extending van Leeuwen's construction~\cite{Baer08,Li08,Ruggenthaler11,Ruggenthaler12,Farzanehpour12} 
to avoid technical problems (related to convergence and analyticity requirements). Here previous rigorous results by Farzanehpour and Tokatly~\cite{Farzanehpour12} on lattice TDDFT are directly applicable to our quantum computational setting.

\medskip
\subsection{The discrete force balance equation}
We summarize the details of the discretized local force-balance equation from~\cite{Farzanehpour12}. More detailed derivations are found in~\cite{Farzanehpour12} and as well as a more general derivation we provide in~\ref{appx:A}. 

Consider a system discretized on a lattice of $M$ points forming a 
Fock space.  In second quantization, the creation $\hat a_i$ and annihilation $\hat a_j^\dag$ operators for arbitrary sites $i$ and $j$ must satisfy $ \hat a_i\hat a_j=-\hat a_j\hat a_i$ and $\hat a_i\hat a_j^\dag=\delta_{ij}-\hat a_j^\dag \hat a_i$. We define a discretized one-body operator as $\hat A=\sum_{n}^M\sum_{m}^MA_{mn}\hat a_m^\dag \hat a_n$ and designate $A$ as the coefficient matrix of the operator. The matrix elements are $A_{mn}=\bra{m}\hat A\ket{n}$ where $\ket{m}$ and $\ket{n}$ are the single electron sites corresponding to operators $\hat a_m$ and $\hat a_n$.  Similar notation and definitions hold for the two-body operators. 

The Hamiltonian, the density at site $j$, and the continuity equation are then given respectively by 
\begin{eqnarray}
\hat H(t)&=& \sum_{ij}[T_{ij}+\delta_{ij}V_{i}(t)]\hat a_i^\dag \hat a_j+\sum_{ijkl} W_{ijkl}\hat a_i^\dag \hat a_j^\dag \hat a_k\hat a_l,\phantom{spac}\label{eq:H}\\
\hat n_j&=&  \hat a_j^\dag \hat a_j, \\
\partial_t \hat n_j&=& -\sum_k \hat J_{jk}=-i\sum_k T_{kj}(\hat a_j^\dag \hat a_k-\hat a_k^\dag \hat a_j).
\end{eqnarray}

For the density of the Kohn-Sham system, $n^{KS}(t)=\langle \hat n\rangle_{\Phi(t)}$, to match the density of the interacting system, $n(t)=\langle \hat n\rangle_{\Psi(t)}$, 
the discretized local force balance equation~\cite{Farzanehpour12} must be satisfied, 
\begin{eqnarray}
S_j^{aim}	&=&
\sum_k(V^{KS}_j-V^{KS}_k) T_{kj}\langle \hat a_j^\dag \hat a_k+\hat a_k^\dag \hat a_j\rangle_{\Phi(t)} \label{eq:1}\\
&=& \sum_k \left\langle  -T_{kj}\hat \Gamma_{jk}+\delta_{jk}\sum_m T_{mj}\hat\Gamma_{jm}\right\rangle_{\Phi(t)}V^{KS}_k\phantom{space}\label{eq:2}\\
 &=& \sum_k K_{jk}V^{KS}_k.\label{eq:K}
\end{eqnarray}
Here $\hat \Gamma_{ij}=\hat a_i^\dag \hat a_j +\hat a_j^\dag \hat a_i$ is 
twice the real part of the one-body reduced density operator.  A complete derivation of this equation is found the~\ref{appx:A}. The vector $S^{aim}$ is defined as $S^{aim}_j(\Psi,\Phi)=\partial_t^2\langle \hat{n}_j\rangle_{\Psi(t)}-\langle \hat Q^{KS}_j\rangle_{\Phi(t)}$. The force balance coefficient matrix, $K= \langle \hat K\rangle_{\Phi(t)}$, is defined through \eqref{eq:2} and \eqref{eq:K}.    
Since the target density enters only through the second derivative appearing in $S^{aim}$, the initial state $\Phi(t_0)$ must reproduce the initial density, $\langle \hat n\rangle_{\Psi(t_0)}$, and the initial time-derivative of the density, $\partial_t \langle \hat n\rangle_{\Psi(t_0)}$. 

The system is non-interacting $V$-representable so long as $K$ is invertible on the domain of spatial inhomogeneous potentials.  Moreover, the Kohn-Sham potential is unique~\cite{Farzanehpour12}. Hence, the domain of $V$-representability is $\Omega=\left\{\Phi\;|\;\textrm{kern}\; K(\Phi)=\{V_{const}\}\right\}$. To ensure efficiency, we must further restrict attention to the interior of this domain where $K$ is sufficiently well-condition with respect to matrix inversion. The cost of the algorithm grows exponentially as one approaches this boundary but can in some cases be mitigated by increasing the number of lattice points. 

\begin{figure}[t!]
\includegraphics[width =\columnwidth]{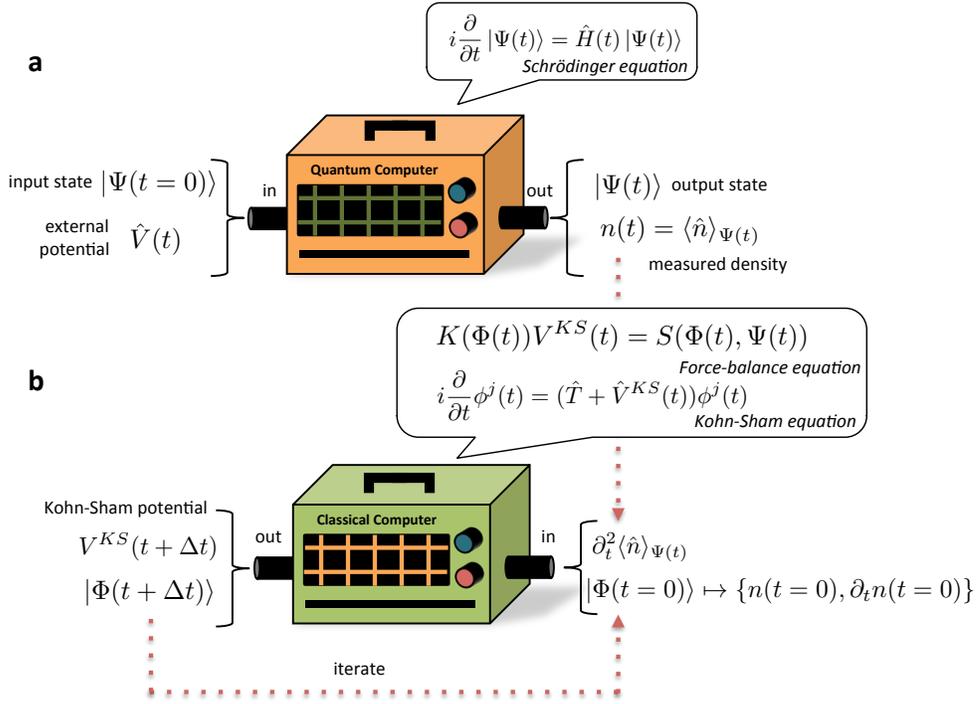}
\caption{In part \textbf{a}, the quantum computer takes as inputs the initial state and the time-dependent Hamiltonian and outputs the density at sufficiently many times. The output allows the numerical computation of the second derivative of the density at each time step which is then utilized by the classical computer to solve the discrete force balance equation \eqref{eq:K}.  A consistent initial state at time $t=0$ must also be given which reproduces $n(0)$ and $\partial_t n(0)$.  Note that while the wave function is obtained from the quantum computation, it cannot be processed for use in the classical part of the computation.   The classical algorithm uses the density to obtain the Kohn-Sham potential at each subsequent time step through an iterated marching process as depicted in part \textbf{b}.
 }
\label{fig:algorithm}
\end{figure}

\medskip
\section{Results overview}
\subsection{Quantum algorithm for the Kohn-Sham potentials}
We consider an algorithm to compute the density with error $\epsilon$ in the 1-norm to be efficient 
when the temporal computational cost grows no more than polynomially in $1/\epsilon$, polynomially in $(\max_{0<s<t}\|H(s)\|)t$, polynomially in $M$, the number of sites, and polynomially in, $N$, the number of electrons. We will describe such an algorithm within the interior of the domain of $V$-representability.

To ensure that the algorithm is efficient, we must assume that the local kinetic energy and the local potential energy are both bounded by constant $E_L$ and that there is a fixed number, $\kappa$ such that  $ \|K^{-1}\|_{\infty}=\max_i \sum_j |(K^{-1})_{ij}|\leq \kappa$.  Note that, as we work in the Fock space, this condition does not preclude Coulombic interactions with nuclei so long as the site orbitals have finite spatial extent. 

We will show that as long as $E_L\leq\sqrt{\log N}$, the algorithm remains efficient for fixed $\kappa$. 
As is typical in numerical matrix analysis~\cite{Horn05,Golub13},  the inversion of a matrix become extremely sensitive to errors as the condition number, $C=\|K\|\;\|K^{-1}\|$, grows. The Lipschitz constant of the Kohn-Sham potential must also scale polynomially with the number of electrons. 

The Lipschitz constant of the Kohn-Sham system could be different than that of the interacting system~\cite{Elliott12,Maitra10} and understanding of the relationship between these timescales requires a better understanding of the initial state $\Phi(t_0)$ dependence. What can be done, in practice, is to begin with an estimate of the maximum Lipschitz constant and if any two consecutive Kohn-Sham potentials violate this bound, restart with a larger Lipschitz constant. 

Our efficient algorithm for computing the time-dependent potential, is depicted in Figure~\ref{fig:algorithm}. There are two stages. The first stage involves a quantum computer and its inputs are the initial many-body state $\Psi(t_0)$ and the external potential $V(t)$ on a given interval $[t_0,t_1]$. The quantum computer then evolves the initial state with the given external potential and obtains the time-evolved wave function at a series of discrete time-steps. The detailed analysis of the expectation estimation algorithm found in Ref.~\cite{Knill07} is used to bound errors in the measurement of the density and to estimate its second time derivative. In order to rigorously bound the error term, we assume that the fourth time derivative of the density  is bounded by a constant, $c_4$.

The total cost of both stages of the algorithm is dominated by the cost of obtaining the wave function as this is the only step that depends directly on the number of electrons. Fortunately, quantum computers can perform time-dependent simulation efficiently~\cite{Wiebe10,Poulin11,Berry13}. The cost depends on the requested error in the wave function, $\delta_\psi$, and depends on the length of time propagated when time is measured relative to the norm of the Hamiltonian being simulated. The essential idea is to leverage the evolution of a controllable system (the quantum computer) with an imposed (simulation) Hamiltonian~\cite{Kassal11}.   It should be highlighted that obtaining the density through experimental spectroscopic means is equivalent to the quantum computation provided the necessary criteria for efficiency and accuracy are satisfied.

The second stage involves only a classical computer, with the inputs being a consistent initial Kohn-Sham state $\Phi(t_0)$ and the interacting $\partial_t^2 \langle \hat n\rangle_{\Psi(t)}$ on the given interval $[t_0,t_1]$.  The output is the Kohn-Sham potential at sufficiently many time steps to ensure the target accuracy is achieved.
The classical algorithm performs matrix inversion of a $M$ by $M$ matrix. The cost for the matrix inversion is $O(M^3)$ regardless of the other problem parameters (such as the number of electrons). 

In our analysis detailed in the next section, we only consider errors from the quantum and classical aspects of our algorithm and we avoided some unnecessary complications by omitting detailed analysis of the classical problem of propagating the non-interacting Kohn-Sham system. Kohn-Sham propagation in the classical computer is well studied and can be done efficiently using various methods \cite{Castro04}. Further, we have also assumed that errors in the measured data are large enough that issues of machine precision do not enter.  Thus, we have ignored the device dependent issue of machine precision in our analysis and refer to standard treatments~\cite{Horn05,Golub13} for the proper handling of this issue.

\medskip
\subsection{Overview of error bounds}
We demonstrate that our algorithm has the desired scaling by bounding the final error in the density. We follow an explicit-type marching process to obtain the solution at time $q\Delta t$ from the solution at $(q-1)\Delta t$. The full technique is elaborated in the next section.

As the classical matrix inversion algorithm at each time step is independent of the number of electrons and the quantum algorithm requires $\poly(N,t_1-t_0,\delta_\psi^{-1},\epsilon^{-1})$ per time step (recall that $\delta_\psi$ is the allowed error in the wave function due to the quantum simulation algorithm), we can utilize error analysis for matrix inversion and an explicit marching process to get a final estimate of the classical and quantum costs for the desired precision $\epsilon$
\begin{eqnarray}
	\textrm{cost }&Classical&=\textrm{poly}(L,t_1-t_0,\epsilon^{-1},M){e^{64 \kappa E_L^2}}\label{eq:13}\\
\textrm{cost }&Quantum&=\textrm{poly}(L,t_1-t_0,\epsilon^{-1},r,M,N)\; {e^{16\kappa E_L^2}}\label{eq:14}\phantom{spc}
\end{eqnarray}
The parameter $r$ is the number of repetitions of the quantum measurement required to obtain a suitably large confidence interval.  We define the $V$-representability parameter as $R=\kappa E_L^2$ and if $R$ is bounded by a constant, then the algorithm is efficient.

The intractability of the algorithm with growing $R$ indicates the breakdown of $V$-representability. 
Despite the exponential dependence of the algorithm on the representability parameter,
the domain of $V$-representability is known to encompass all time-analytic Kohn-Sham potentials in the continuum limit~\cite{Baer08,Li08,Ruggenthaler11,Ruggenthaler12}.  Examining the exponential dependence, it is clear that increases in $\kappa$ can be offset by decreases in the local energy. 

\section{Derivation of error bounds}
\subsection{Description of techniques used to bound cost}
Before diving into the details, let us give an overview of our techniques and what is to follow. In the first subsection, we look at the error in the wave function at time $t$.  In each time step, the error is bounded from the errors in the previous steps. This leads to a recursion relation which we solve to get a bound for the total error at any time step. This error is propagated forward because we must solve $KV=S=Q+\partial_t^2 n$ for $V$ based on the data from the previous time step.  The error in $\partial_t^2 n$ is due to the finite precision of the quantum computation and is independent of previous times. In the second subsection, the error in the density is then derived followed by a cost analysis in the final subsection.

We rescale time by factor $c$ such $t_1-t_0=1$ to get the final time step $z=1/\Delta t$. This rescaling is possible because there is no preferred units of time.
That said the rescaling of time cannot be done indefinitely for two reasons. First, the Lipschitz constant of both the real and the KS system must be rescaled by same factor of $c$. Since the cost
of the algorithm depends on the Lipschitz constant, increasingly long times will require more resources.  Second, the quantum simulation algorithm does have an intrinsic time scale
set by the norm of the $H$ and its time derivatives~\cite{Wiebe10,Poulin11,Berry13}. Rescaling time by $c$ increases the norm of $H$ by the same factor; consequently, the difficulty of 
the quantum simulation is invariant to trivial rescaling of the dynamics.

It is important to get estimates which do not directly depend on the number of sites. To do this, we assume that the lattice is locally connected under the hopping term such that there are at most $d$ elements per row of $T$ (since $T$ is symmetric, it is also $d$-col-sparse). This is equivalent to a bound for the local kinetic energy.

Throughout, we work with the matrix representations of the operators and the states.  
The $L_p$ vector norms \cite{Horn05} with $p=1$, $2$, and $\infty$ are defined by $|x|_p=\left(\sum |x_i|^p  \right)^{1/p}$.   
The induced matrix norms are defined by $\|A\|_p=\max_{|x|_p=1}|Ax|_p$. Induced norms are important because they are compatible with the vector norm such that $|Mx|_p=\|M\|_p|x|_p$.
The vector 1-norm is appropriate for probability distributions and the vector 2-norm is appropriate for wave functions.  The matrix 2-norm is also called the spectral norm and is equal to the maximum absolute value of an eigenvalue.  For a diagonal matrix, $D$, the matrix 2-norm is the vector $\infty$-norm of $\textrm{diag}(D)$. Note that $|x|_p\geq|x|_{p'}$ for $p<p'$. Important, non-trivial characterizations of the infinity norms are $|x|_\infty=\max_i |x_i|$ and  $\|A\|_\infty=\max_i \sum_j |A_{ij}|$.

\subsection{Error in the wave function via recursion relations}
We bound the error of the evolution operator from time $k\Delta t$ to $(k-1)\Delta t$, denoted $\|\Delta U({k,k-1)}\|_2$, in terms of the previous time step in order 
to obtain a recursion relation.  We first bound the errors in the potential 
due to the time discretization and then those due to the computation errors using Lemma \ref{lem:unitary} found in~\ref{C}.   The computation errors will depend on the error at the previous time step which will lead to the recursion relation sought after.

To bound the error in $\|\Delta U\|_2$ we must bound the error in the potential 
$|\Delta V|_\infty\leq|\Delta V^{\Delta t}|_\infty+|\Delta V^{comp}|_\infty$. We define $ V^{\Delta t}(t)=V({t_k})$ with 
$k$ such that $|t-t_k|\leq |t-t_m|$ for all $m$. Here, $\{V(t_k)\}$ is the discretized potential 
with time step $|t_j-t_{j+1}|=\Delta t$. The error due to temporal discretization can be controlled 
assuming a Lipschitz constant $L$ for the potential such that for all $t$ and $t'$,
$|V(t)-V({t'})|_\infty/|t-t'|\leq L$.  Thus, for all $t$, 
\begin{equation}
|\Delta V^{\Delta t}|_\infty=|V(t)-V^{\Delta t}(t)|_\infty\leq L\Delta t.
\end{equation} 

The computational error 
$|\Delta V^{comp}|_\infty$ is bounded using Lemma \ref{lem:linalg} in~\ref{C} with $\|K^{-1}\|_\infty\leq \kappa$ and the assumption $|V|_\infty\leq E_L$,
\begin{equation}
	|\Delta V^{comp}|_\infty\leq \kappa \left( |\Delta Q|_\infty +|\Delta \partial_t^2 n|_\infty+\|\Delta K\|_\infty E_L \right)
	\label{eq:comp}
\end{equation}
Now we need to bound the errors in $|\Delta Q|_\infty$ and $\|\Delta K\|_\infty$ in terms of the error $\delta^\Gamma_k=\max_{ij}|\Delta \Gamma_{ij}(k-1)|$ at time step $k-1$.  

The error bound for $|\Delta Q|_\infty$ is obtained as  
\begin{eqnarray}
	|\Delta Q|_\infty&\leq&	\max_i|([T,\Delta\Gamma]T)_{i}|\\
	&\leq&\max_i \left|\sum_{pq}T_{ip}\Delta\Gamma_{pq}T_{qi}-\sum_{mn}\Delta\Gamma_{im}T_{mn}T_{ni}\right|\nonumber\\
&\leq&
2\delta_{k-1}^\Gamma d^2 \left(\max_{ij}|T_{ij}|\right)^2\nonumber\\
|\Delta Q|_\infty&\leq&2\delta_{k-1}^\Gamma E_L^2
\label{eq:Q}
\end{eqnarray}
The product $d\max |T_{ij}|$ is the maximum local kinetic energy and is, by assumption, bounded by $E_L$. Similarly,
\begin{eqnarray}
\|\Delta K\|_{\infty}&=&\max_i\sum_j|K_{ij}-\tilde{K}_{ij}|\\
&=&
\max_i\sum_j  |T_{ij}\Delta\Gamma_{ij}-\delta_{ij}\sum_m T_{mj}\Delta\Gamma_{mj}|\nonumber\\
&\leq& \max_i\sum_j  |T_{ij}\Delta\Gamma_{ij}|+\max_i\left|\sum_m T_{mi}\Delta\Gamma_{mi}\right|\nonumber\\
&\leq&\delta_{k-1}^\Gamma \max_i\sum_j  |T_{ij}|+\delta_k^\Gamma\max_i\left|\sum_m T_{mi}\right|\nonumber\\
&\leq&2d\delta_{k-1}^\Gamma \left( \max_{ij}|T_{ij}| \right)\nonumber\\
\|\Delta K\|_{\infty}&\leq&2\delta_{k-1}^\Gamma E_L\label{eq:Kerr}
\end{eqnarray}
We convert from errors in the real part of the 1-RDM to errors in the wave function via
\begin{eqnarray}
	\delta^{\Gamma_{ij}}&=& |\Delta\Gamma_{ij}|\nonumber\\
		&\leq& |(\bra{\Phi}\Gamma_{ij})\ket{\Delta\Phi}|+|\bra{\Delta\Phi}(\Gamma_{ij}\ket{\Phi})|\label{eq:expn}\\
		&\leq& 2|\Delta\Phi|_2\;|\Gamma_{ij}\ket{\Phi}|_2\leq 2 |\Delta\Phi|_2\;\|\Gamma_{ij}\|_2\nonumber\\
		&\leq&4|\Delta\Phi|_2\label{eq:ev}
\end{eqnarray}
The inequality  \eqref{eq:ev} follows because the maximum eigenvalue of $\langle a_i^\dag a_j\rangle_\psi$ for all $\psi$ is bounded by $1$ and $\Gamma_{ij}=2\textrm{ real}\langle a_i^\dag a_j\rangle_\psi$. 
Taking the maximum over all $i$, $j$ we have 
\begin{equation}
	\delta_{k-1}^\Gamma=\max_{ij}	(\delta^{\Gamma_{ij}}_{k-1})	\leq4\delta_{k-1}^\Phi\label{eq:G}
	\end{equation}
 Here $\delta_{k-1}^\Phi$ bounds the error in the two-norm $|\Delta \Phi|_2$ at time step $k-1$.

Putting together \eqref{eq:comp}, \eqref{eq:Q}, \eqref{eq:Kerr}, and \eqref{eq:G} gives
\begin{eqnarray}
	|\Delta V^{comp}|_\infty\leq16\kappa E_L^2\delta_{k-1}^{\Phi}+\kappa|\Delta \partial_t^2n|_\infty
	\label{eq:comp2}
\end{eqnarray}

To obtain the desired recursion relation, we note that at time step $k$ the error can be bounded via
\begin{eqnarray}
|\Phi(k)-\tilde{\Phi}(k)|_2&\leq& \|\Delta U({k,k-1})\|_2
+\delta_{k-1}^\Phi
\end{eqnarray}
obtained using an expansion similar to the one found in \eqref{eq:expn}. 
Utilizing Lemma \ref{lem:unitary} (see~\ref{C}) and bound \eqref{eq:comp2}, we arrive at 
\begin{eqnarray}
	|\Phi(k)-\tilde{\Phi}(k)|_2
	&\leq&\delta^\Phi_{k-1}+\Delta t|\Delta_{k,k-1}V|_\infty\nonumber\\
	&\leq&\delta^\Phi_{k-1}+\Delta t(|\Delta V^{\Delta t}|_\infty+|\Delta V^{comp}|_\infty)\nonumber\\
	&\leq&\delta^\Phi_{k-1}+\Delta t(L\Delta t+16\kappa E_L^2\delta_{k-1}^{\Phi}+\kappa|\Delta \partial_t^2n|_\infty)\nonumber\\
	&\leq&(16\kappa E_L^2\Delta t+1)\delta_{k-1}^{\Phi}\nonumber\\
	&&+\Delta t(L\Delta t+\kappa|\Delta \partial_t^2n|_\infty)\label{eq:prerecus}
\end{eqnarray}
To obtain a recursion relation we let the LHS of \eqref{eq:prerecus} define the new upper bound at time step $k$. 

Recursion relations of the form $f_k=af_{k-1}+b$ have closed solution
$f_k=b(a^k-1)(a-1)^{-1}$. Thus, we have for the bound at time step $k$
\begin{eqnarray}
	\delta_k^\Phi	
	&=&\frac{L\Delta t+\kappa|\Delta \partial_t^2n|_\infty}{16\kappa E_L^2} \left\{(16\kappa E_L^2\Delta t+1)^k-1\right\}
	\end{eqnarray}
Now consider the final time step at $z=1/\Delta t$, and $e^x\geq(xz^{-1}+1)^z$ for $z<\infty$,
	\begin{eqnarray}
		\delta_z^\Phi&=&\frac{L\Delta t+\kappa|\Delta \partial_t^2n|_\infty}{16\kappa E_L^2} \left\{\left(\frac{16\kappa E_L^2}{z}+1\right)^z-1\right\}\\
		&\leq&\left( \frac{1}{z}\frac{L}{16\kappa E_L^2}+\frac{|\Delta \partial_t^2n|_\infty}{16E_L^2}\right) \left\{e^{16\kappa E^2_L}-1\right\}\\
		&\leq&\left( \frac{1}{z}\frac{L}{16\kappa E_L^2}+\frac{\sqrt{2c_4\delta_n}}{16E_L^2}\right) \left\{e^{16\kappa E^2_L}-1\right\}\label{eq:wferr}
	\end{eqnarray}
We applied Lemma \ref{lem:ddot} from~\ref{C} to obtain the last line. 
This bound is similar to the Euler formula for the global error but arises from the iterative 
dependence of the potential on the previous error; not from any approximate solution to
an ordinary differential equation. 

To ensure that the cost is polynomial in $M$ and $N$ for fixed $\kappa$, we must insist that $E_L\leq \sqrt{\log N}$.  Consider the exponential factor 
and assume that $E_L>1$. Then $\exp(16\kappa E^2_L)\leq \exp(16\kappa \log N)=N^{16\kappa}$ is a polynomial for fixed $\kappa$.

\subsection{Error bound on the density}
To finish the derivation, we utilize our bound for the wave function at the final time to get a bound on the error of the density at the final time.  This will translate into conditions for the number of steps needed and the precision required for the density. 
The error in the density is bounded by the error in the wave function through  the following,
\begin{eqnarray*}
|\Delta n|_1&=& |\bra{\Phi}n\ket{\Phi}-\bra{\tilde \Phi}n\ket{\tilde\Phi}|_1\\
&=&|\bra{\Phi}n\ket{\Phi}-\bra{\Phi}n\ket{\tilde \Phi} +\bra{\Phi}n\ket{\tilde \Phi} -\bra{\tilde \Phi}n\ket{\tilde\Phi}|_1\\
&\leq&|\bra{\Phi}n\ket{\Delta\Phi} |_1+|\bra{\Delta\Phi}n\ket{\Phi}|_1
\end{eqnarray*}
Now consider the $i$-th element, $n_i=a_i^\dag a_i$, and the Cauchy-Schwarz $|\bra{x}y\rangle| \leq|x|_2\;|y|_2$,
\begin{eqnarray*}
	\left|\left(\bra{\Phi}a_i^\dag a_i\right)\ket{\Delta\Phi}\right|&\leq& \left|\bra{\Phi}a_i^\dag a_i\right|_2\; \left|\Delta\Phi\right|_2
	\leq\|a_i^\dag a_i\|_2\; |\Delta\Phi|_2\\
|\bra{\Phi}n_i\ket{\Delta\Phi} |_1	&\leq &|\Delta\Phi|_2
\end{eqnarray*}
Finally, from the definition of the 1-norm,
\begin{eqnarray}
	|\Delta n(z)|_1&\leq& \sum_i \left(|\bra{\Delta\Phi(z)}n_i\ket{\tilde\Phi(z)}|+|\bra{\Phi(z)}n_i\ket{\Delta\Phi(z)}|\right)\nonumber\\&\leq&2M|\Delta\Phi(z)|_2\leq2M\delta_z^\Phi
	\label{eq:err1}
\end{eqnarray}

For final error $\epsilon$ in the 1-norm of the density, we allow error $\epsilon/2$ due to the time step error and 
$\epsilon/2$ error due to the density measurement. Following \eqref{eq:wferr} and \eqref{eq:err1}, we have for the number of time steps,
\begin{equation}
	\left( \frac{ML}{4\epsilon\kappa E_L^2}\right) \left\{e^{16\kappa E^2_L}-1\right\}	\leq z.
	\label{eq:z}
\end{equation}
The bound for the measurement precision also follows as,
\begin{equation}
	\left( \frac{\sqrt{2}Mc_4^{1/2}}{4\epsilon E_L^2}\right)^{2} \left\{e^{16\kappa E^2_L}-1\right\}^{2}	\leq\delta_n^{-1}
	\label{eq:n}
\end{equation}

\subsection{Cost analysis}
To obtain the cost for the quantum simulation and the subsequent measurement, we leverage detailed analysis of the expectation estimation algorithm \cite{Knill07}. To measure the density at time $t\in[t_0, t_1]$, a quantum simulation~\cite{Wiebe10,Poulin11,Berry13}  of $\psi({t_0})\mapsto\psi(t)$ is performed at cost $q\leq \poly(N,t_1-t_0,\delta_\psi^{-1})$ following an assumption that $H(t)$ is simulatable on a quantum computer which is usually the case for physical systems.  In order to simplify the analysis, we assume that $\delta_\psi$ is such that $\delta_n+\delta_\psi\approx\delta_n$ is a reasonable approximation. Given the recent algorithm for logarithmically small errors \cite{Berry13}, this assumption is reasonable.   

The expectation estimation algorithm (EEA) was analyzed in \cite{Knill07}. The algorithm EEA$(\psi,A,\delta,c)$ measures $\bra\psi A \ket \psi$ with precision $\delta$ and confidence $c$ such that Prob$(\tilde a-\delta\leq \bra\psi A \ket \psi\leq \tilde a+\delta)>c$ , that is, the probability that the measured value $\tilde a$ is within $\delta$ of $\bra\psi A\ket\psi$ is bounded from below by $c$. The idea is to use an approximate Taylor expansion:  $$\bra\psi A \ket \psi\approx i\left( \bra\psi e^{-iAs}\ket\psi -1 \right)/s$$
The confidence interval is improved by repeating the protocol $r=|\log(1-c)|$ times.  If the spectrum of $A$ is bounded by $1$, then the algorithm requires on the order $O(r/\delta^{3/2})$ copies of $\psi$ and $O(r/\delta^{3/2})$ uses of $\exp(-iAs)$ with $s=\sqrt{3\delta}/2$.

To perform the measurement of the density, we assume that the wave function is represented in first quantization~\cite{Kassal11} such that the necessary evolution operator is: $\exp(-i\hat n_j s)=\prod_k^N\exp(-i \ket{j}\bra{j}^{(k)}t)$. Here each Hamiltonian $\ket{j}\bra{j}^{(k)}$ acts on site $j$ of the $k$th electron simulation grid. Hence, each operation is local with disjoint support.  Since there are $NM$ sites, this can be done efficiently.  Comparing the costs, we will assume that the generation of the state dominates the cost.

Combining these facts, we arrive at the conclusion that the cost to measure the density to within $\delta_n$ precision is 
\begin{eqnarray}
	\textrm{cost }{Quantum}&=& \textrm{cost }StateGen+\textrm{cost }EEA\nonumber\\
	&\approx& \textrm{cost }StateGen\nonumber\\
&=& O\left(rq\delta_n^{-3/2}\right)\phantom{spc}
\label{eq:q0}
\end{eqnarray}
Pairing this with \eqref{eq:z} and \eqref{eq:n}, we have an estimate for the number of quantum operations 
\begin{eqnarray*}
\textrm{cost }	Quantum &=&  O\left(rqz\delta_n^{-3/2}\right)\\
&=&\textrm{poly}(L,\epsilon^{-1},r,M,N)\; {e^{64\kappa E_L^2}}
\end{eqnarray*}
The classical computational algorithm is an $[M\times M]$ matrix inversion at each time step costing
\begin{eqnarray*}
 \textrm{cost } Classical &=& O(z M^3)\\
 &=&O\left(M^3\left( \frac{ML}{4\epsilon\kappa E_L^2}\right) \left\{e^{16\kappa E^2_L}-1\right\}\right)\\
&=& \textrm{poly}(L,\epsilon^{-1},M)e^{16\kappa E^2_L}
\end{eqnarray*}

\medskip
\section{Quantum computation and the computational complexity of TDDFT}
Since the cost of both the quantum and classical algorithms scale as a polynomial of the input parameters, we can say that this is an efficient quantum algorithm for computing the time-dependent Kohn-Sham potential. Therefore, the computation of the Kohn-Sham potential is in the complexity class described by bounded error quantum computers running in polynomial time (BQP). This is the class of problems that can be solved efficiently on a quantum computer.  

Quantum computers have long been considered as a tool for simulating quantum physics~\cite{Feynman82,Lloyd96,Brown10,Kassal11,Yung12}. The applications of quantum simulation fall into two broad categories: (1) dynamics~\cite{Zalka98,Lidar99,Kassal08} and (2) ground state properties~\cite{Somma02,Aspuru05,Whitfield11}.  The first problem is in the spirit of the original proposal by Feynman~\cite{Feynman82} and is the focus of the current work.  

Unfortunately, unlike classical simulations, the final wave function of a quantum simulation cannot be readily extracted due to the exponentially large size of the simulated Hilbert space. The retrieval of the full state would require quantum state tomography, which in the worst case, requires an exponential number of copies of the state and would take an exponentially large amount of space to even store the data classically.  If, instead, the simulation results can be encoded into a minimal set of information and the simulation algorithm can be efficiently executed on a quantum computer, then the problem is in the complexity class BQP. Extraction of the density~\cite{Knill07} is the relevant example of such a quantity that can be obtained.  Note that the density's time-evolution is dictated by wave function and hence the Schr\"odinger equation. 


In summary, what we have proven is that computing the Kohn-Sham potential at bounded $\kappa E_L^2$ is in the complexity class BQP.  To be precise, two technical comments are in order. First, we point out that we are really focused on promise problems since we require constraints on the inputs to be satisfied (i.e. $\kappa E_L^2<$constant). Second, computing the map \eqref{eq:KSmap} is not a decision problem and cannot technically be in the complexity class BQP.  However, we can define the map to $b$ bits of precision by solving $M\log b$ accept-reject instances from the corresponding decision problem, which is in BQP. These concepts are further elaborated in \cite{Kitaev02,Watrous09,Whitfield13}. 

While the quantum computer would allow most dynamical quantities to be extracted without resorting to the Kohn-Sham formalism, we have attempted to understand the difficulty of generating the Kohn-Sham potential.  We only consider a polynomial time quantum computer as a tool for reasoning about the complexity of computing Kohn-Sham potentials.  In essence, the Kohn-Sham potentials are a compressed classically tractable encoding of the quantum dynamics that allows the quantum simulation to be performed in polynomial time on a classical computer.  This may have implications for the question of whether a classical witness can be used in place of quantum witness in the quantum Merlin Arthur game~\cite{Watrous09} (i.e. QMA$\stackrel{?}{=}$QCMA).  A second useful by-product of our result is the introduction of the $V$-representability parameter which has general significance for practical computational settings.


\medskip
\section{Concluding remarks}
In this article, we introduced a $V$-representability parameter and have rigorously demonstrated two fundamental results concerning the computational complexity of time dependent density functional theory with bounded representability parameter.  First, we showed that with a quantum computer, one need only provide the initial state and external potential on the interval $[t_0, t_1]$ in order to generate the time-dependent Kohn-Sham potentials. Second, we show that if one provides the density on the interval $[t_0, t_1]$, the Kohn-Sham potential can be obtained efficiently with a classical computer.

We point out that an alternative to our lattice approach may exist using tools from partial differential equations.  Early results in this direction have been pioneered using an iterated map whose domain of convergence defines $V$-representability~\cite{Ruggenthaler11,Ruggenthaler12}. The convergence properties of the map have been studied in several one-dimensional numerical examples~\cite{Nielsen13, Ruggenthaler11,Ruggenthaler12}. Analytical understanding of the rate of convergence to the fixed point would complement the present work with an alternate formulation directly in real space.

While this paper focuses on the simulation of quantum dynamics, the complexity of the ground state problem is interesting in its own right \cite{Kitaev02,Watrous09,Whitfield13, Schuch09}. In this context, ground state DFT was formally shown \cite{Schuch09} to be difficult even with polynomial time quantum computation.  Interestingly,
in that work, the Levy-minimization procedure \cite{Levy79} was utilized for the interacting system to avoid discussing the non-interacting ground state Kohn-Sham system and its existence. We have worked within the Kohn-Sham picture, but it may be interesting to construct a functional approach directly.

Future research involves improving the scaling with the condition number or showing that our observed exponential dependence on the representability parameter is optimal. Our work can likely be extended to bosonic and spin systems~\cite{Tempel12} since we have relied minimally on the fermionic properties of electrons.  Finally, pre-conditioning the matrix $K$ can also help increase the domain of computationally feasible $V$-representability.

Our findings provide further illustration of how the fields
of quantum computing and quantum information can contribute to our understanding of
physical systems through the examination of quantum complexity theory.

\subsection*{Acknowledgements:} We appreciate helpful discussions with F. Verstraete and D. Nagaj.  JDW thanks Vienna Center for Quantum Science and Technology for the VCQ Postdoctoral Fellowship and acknowledges support from the Ford Foundation. MHY acknowledges funding support from the National Basic Research Program of China Grant 2011CBA00300, 2011CBA00301, the National Natural Science Foundation of China Grant 61033001, 61061130540. MHY, DGT, and AAG acknowledge the National Science Foundation under grant CHE-1152291 as well as the Air Force Office of Scientific Research under grant FA9550-12-1-0046. AAG acknowledges generous support from the Corning Foundation. 

\section*{Bibliography}
\bibliography{tddft.bib}

\begin{thebibliography}{10}

\bibitem{Dreuw00}
A.~Dreuw, J.~L. Weisman, and M.~Head-Gordon.
\newblock Long-range charge-transfer excited states in time-dependent density
  functional theory require non-local exchange.
\newblock {\em J. Chem. Phys.}, 119:2943, 2000.

\bibitem{Tapavicza08}
E.~Tapavicza, I.~Tavernelli, U.~Rothlisberger, C.~Filippi, and M.~E. Casida.
\newblock Mixed time-dependent density-functional theory/classical trajectory
  surface hopping study of oxirane photochemistry.
\newblock {\em J. Chem. Phys.}, 129:124108, 2008.

\bibitem{Petersilka99}
M.~Petersilka and E.~K.~U. Gross.
\newblock Strong-field double ionization of helium: A density-functional
  perspective.
\newblock {\em Laser Phys.}, 9:1, 1999.

\bibitem{Whitfield13}
J.~D. Whitfield, P.~J. Love, and A.~Aspuru-Guzik.
\newblock Computational complexity in electronic structure.
\newblock {\em Phys. Chem. Chem. Phys.}, 15:397, 2013.

\bibitem{Brown10}
K.~L. Brown, W.~J. Munro, and V.~M. Kendon.
\newblock Using quantum computers for quantum simulation.
\newblock {\em Entropy}, 12:2268, 2010.

\bibitem{Kassal11}
I.~Kassal, J.D. Whitfield, A.~Perdomo-Ortiz, M.-H. Yung, and A.~Aspuru-Guzik.
\newblock Simulating chemistry using quantum computers.
\newblock {\em Annu. Rev. Phys. Chem.}, 62:185--207, 2011.

\bibitem{Yung12}
M.-H. Yung, J.~D. Whitfield, S.~Boixo, D.~G. Tempel, and A.~Aspuru-Guzik.
\newblock {\em Introduction to Quantum Algorithms for Physics and Chemistry},
  volume 154, pages 67--106.
\newblock John Wiley and Songs, 2014.

\bibitem{Bennett97}
C.~H. Bennett, E.~Bernstein, G.~Brassard, and U.~Vazirani.
\newblock Strengths and weaknesses of quantum computing.
\newblock {\em SIAM J. Computing}, 26(5):1510--1524, 1997.

\bibitem{Schuch09}
Norbert Schuch and Frank Verstraete.
\newblock {\em Nature Physics}, 5:732, 2009.

\bibitem{Maitra10}
N.~T. Maitra, T.~N. Todorov, C.~Woodward, and K.~Burke.
\newblock Density-potential mapping in time-dependent density-functional
  theory.
\newblock {\em Phys. Rev. A}, 81:042525, Apr 2010.

\bibitem{Runge84}
E.~Runge and E.~K.~U. Gross.
\newblock Density-functional theory for time-dependent systems.
\newblock {\em Phys. Rev. Lett.}, 52:997, 1984.

\bibitem{Leeuwen99}
Robert van Leeuwen.
\newblock {Mapping from Densities to Potentials in Time-Dependent
  Density-Functional Theory}.
\newblock {\em Phys. Rev. Lett.}, 82(19):3863--3866, May 1999.

\bibitem{Baer08}
Roi Baer.
\newblock On the mapping of time-dependent densities onto potentials in quantum
  mechanics.
\newblock {\em J. Chem. Phys.}, 128(4):044103, 2008.

\bibitem{Li08}
Yonghui Li and C.~A. Ullrich.
\newblock {Time-dependent V-representability on lattice systems.}
\newblock {\em J. Chem. Phys.}, 129(4):044105, July 2008.

\bibitem{Ruggenthaler11}
M.~Ruggenthaler and R.~van Leeuwen.
\newblock Global fixed-point proof of time-dependent density-functional theory.
\newblock {\em Europhys. Lett.}, 95(1):13001, 2011.

\bibitem{Ruggenthaler12}
M.~Ruggenthaler, K.~J.~H. Giesbertz, M.~Penz, and R.~van Leeuwen.
\newblock Density-potential mappings in quantum dynamics.
\newblock {\em Phys. Rev. A}, 85:052504, May 2012.

\bibitem{Farzanehpour12}
M.~Farzanehpour and I.~V. Tokatly.
\newblock Time-dependent density functional theory on a lattice.
\newblock {\em Phys. Rev. B}, 86:125130, 2012.

\bibitem{Horn05}
R.~A. Horn and C.~R. Johnson.
\newblock {\em Matrix analysis}.
\newblock Cambridge University Press, 2005.

\bibitem{Golub13}
G.~H. Golub and C.~F.~Van Loan.
\newblock {\em Matrix computations}.
\newblock Johns Hopkins University Press, 2013.

\bibitem{Elliott12}
P.~Elliott and N.~T. Maitra.
\newblock Propagation of initially excited states in time-dependent
  density-functional theory.
\newblock {\em Phys. Rev. A}, 85:052510, 2012.

\bibitem{Knill07}
E.~Knill, G.~Ortiz, and R.~Somma.
\newblock Optimal quantum measurements of expectation values of observables.
\newblock {\em Phys. Rev. A}, 75:012328, 2007.

\bibitem{Wiebe10}
N.~Wiebe, D.~Berry, P.~Hoyer, and B.~C. Sanders.
\newblock Higher order decompositions of ordered operator exponentials.
\newblock {\em J. Phys. A: Math. Theor.}, 43:065203, 2010.

\bibitem{Poulin11}
D.~Poulin, A.~Qarry, R.~Somma, and F.~Verstraete.
\newblock Quantum simulation of time-dependent hamiltonians and the convenient
  illusion of hilbert space.
\newblock {\em Phys. Rev. Lett.}, 106(17):170501, Apr 2011.

\bibitem{Berry13}
D.~W. Berry, R.~Cleve, and R.~D. Somma.
\newblock Exponential improvement in precision for hamiltonian-evolution
  simulation.
\newblock {\em arXiv:1308.5424}, 2013.

\bibitem{Castro04}
A.~Castro, M.~A.~L. Marques, and A.~Rubio.
\newblock Propagators for the time-dependent kohn–sham equations.
\newblock {\em J. Chem. Phys.}, 121:3425, 2004.

\bibitem{Feynman82}
R.~Feynman.
\newblock {\em Optics News (now OPN)}, 11(11):11--22, February 1982.

\bibitem{Lloyd96}
S.~Lloyd.
\newblock Universal quantum simulators.
\newblock {\em Science}, 273:1073--8, Aug 1996.

\bibitem{Zalka98}
C.~Zalka.
\newblock {\em Proc. R. Soc. Lond. A}, 454:313, 1998.

\bibitem{Lidar99}
D.~A. Lidar and H.~Wang.
\newblock Calculating the thermal rate constant with exponential speedup on a
  quantum computer.
\newblock {\em Phys. Rev. E}, 59:2429, 1999.

\bibitem{Kassal08}
I.~Kassal, S.~P. Jordan, P.~J. Love, M.~Mohseni, and A.~Aspuru-Guzik.
\newblock {\em Proc. Natl. Acad. Sci.}, 105:18681, 2008.

\bibitem{Somma02}
R.~Somma, G.~Ortiz, J.~E. Gubernatis, E.~Knill, and R.~Laflamme.
\newblock {\em Phys. Rev. A}, 65:042323, 2002.

\bibitem{Aspuru05}
A.~Aspuru-Guzik, A.~D. Dutoi, P.~Love, and M.~Head-Gordon.
\newblock Simulated quantum computation of molecular energies.
\newblock {\em Science}, 309:1704, 2005.

\bibitem{Whitfield11}
J.~D. Whitfield, J.~D. Biamonte, and A.~Aspuru-Guzik.
\newblock Simulation of electronic structure hamiltonians using quantum
  computers.
\newblock {\em Mol. Phys.}, 109:735, 2011.

\bibitem{Kitaev02}
A.~Kitaev, A.~Shen, and M.~Vyalyi.
\newblock {\em Classical and quantum computation}, volume~47 of {\em Graduate
  Studies in Mathematics}.
\newblock American Mathematics Society, 2002.

\bibitem{Watrous09}
J.~Watrous.
\newblock Quantum computational complexity.
\newblock In {\em Encyclopedia of Complexity and System Science}. Springer
  Berlin, 2009.
\newblock also see arXiv:quant-ph/0804.3401.

\bibitem{Nielsen13}
S.~E.~B. Nielsen, M.~Ruggenthaler, and R.~van Leeuwen.
\newblock Many-body quantum dynammics from the density.
\newblock {\em Europhys. Lett.}, 101:33001, 2013.

\bibitem{Levy79}
M.~Levy.
\newblock Universal variational functionals of electron densities, first-order
  density matrices, and natural spin-orbitals and solution of the
  v-representability problem.
\newblock {\em Proc. Natl. Acad. Sci. USA}, 76:6062--6065, 1979.

\bibitem{Tempel12}
D.~G. Tempel and A.~Aspuru-Guzik.
\newblock Quantum computing without wavefunctions: Time-dependent density
  functional theory for universal quantum computation.
\newblock {\em Sci. Rep.}, 2:391, 2012.

\end{thebibliography}
\bibliographystyle{unsrt}
\newpage
\appendix

\section{Derivation of discrete local-force balance equation}\label{appx:A}
The results found in Farzanehpour and Tokatly~\cite{Farzanehpour12}, are directly applicable to the quantum computational case since a quantum simulation would ultimately require a discretized space~\cite{Kassal11}.  In~\cite{Farzanehpour12}, they utilized a discrete space but derive all equations in first quantization. For this reason, we think the derivation in second quantization may be useful for future inquiries into discretized Kohn-Sham systems and provide the necessary details in this appendix. Throughout this section, we consider the non-interacting Kohn-Sham system without an interaction term, i.e.~$\hat W=0$.

First note,
$[\hat a_p^\dag \hat a_q,\hat a_j^\dag]=\hat a_p^\dag\delta_{jq}$ and 
$[\hat a_p^\dag \hat a_q,\hat a_i]     =-\hat a_q\delta_{ip}$  to get the first derivative of the density 
\begin{eqnarray}
\partial_t\hat {n}_j&=&	-\sum_k \hat J_{jk}=i[\hat H,\hat n_j]\\
	&=&i\sum_{pq} T_{pq}[\hat a_p^\dag \hat a_q,\hat a_j^\dag \hat a_j] \\
	&=& -i\sum_{k}T_{kj}  (\hat a_j^\dag \hat a_k-\hat a_k^\dag \hat a_j)	
\end{eqnarray}
Here and throughout, we assume that there is no magnetic field present and consequently $T_{ij}=T_{ji}$. 

To get to the discrete force balance equation, consider $\partial_t^2\hat{n}_j=i[\hat H,\partial_t\hat{n}_j]=i[\hat V,\partial_t\hat{n}_j]+\hat Q_j+i[\hat W,\partial_t\hat{n}_j]$ with 
$\hat Q_j=i[\hat T,\partial_t\hat{n}_j ]$, a term that does not depend on the local potential.
This is analogous to \eqref{eq:vL} first derived in van Leeuwen's paper~\cite{Leeuwen99}.  

In the case that the non-interacting Kohn-Sham potential is desired, only the momentum-stress tensor is needed since $\hat W=0$ in the non-interacting system.  We will need the expression for $\hat Q_j$ so let us compute it now for the KS system,
\begin{eqnarray}
\hat Q_j&=& i[\hat T,\partial_t \hat n_j]= \sum_{pq} \sum_k T_{pq}T_{jk}[\hat a_p^\dag \hat a_q,\hat a_j^\dag \hat a_k-\hat a_k^\dag \hat a_j]\\
&=&\sum_{pq} \sum_k T_{pq}T_{jk}
	(\hat a_p^\dag \hat a_k +\hat a_k^\dag \hat a_p) \delta_{jq}
-\sum_{pq} \sum_k T_{pq}T_{jk}	(\hat a_j^\dag \hat a_p+\hat a_p^\dag \hat a_j) \delta_{qk}
\\
&=&\sum_{pq} \sum_k T_{pq}T_{jk}
\left\{
	 \hat \Gamma_{kp} \delta_{jq}- \hat \Gamma_{jp} \delta_{qk}
\right\}\\
&=& \sum_{pq}  T_{pq}\delta_{jq}\left(\sum_kT_{jk} \hat \Gamma_{kp}\right) - \sum_{qk}  T_{jk}\delta_{qk}\left(\sum_p \hat \Gamma_{jp}T_{pq}\right)\\
&=& \sum_p\left(\sum_k T_{jk} \hat \Gamma_{kp}\right)T_{pj}- \sum_{q} \left(\sum_p \hat \Gamma_{jp}T_{pq}\right)T_{qj}\\
&=&\left(\phantom{|^A_B}  \left[T,\hat \Gamma\right]\;T\phantom{|^A_B}  \right)_{jj}
\end{eqnarray}
Here we have defined the real part of the 1-RDM as $\hat \Gamma_{ij}= \hat a_i^\dag \hat a_j+\hat a_j^\dag \hat a_i$ following the notation in the main text and $T$ is the coefficient matrix of the kinetic energy operator.

Next, we obtain more convenient 
representations for the local force balance equation.  Beginning with $\partial_t^2 \hat n=i[\hat H,\partial_t \hat n]=i[\hat T,\partial_t \hat n]+i[\hat V,\partial_t \hat n]
=\hat Q+i[\hat V,\partial_t \hat n]$.
 Defining $\hat S=\partial_t^2 \hat n-\hat Q$, 
 we have the following,
\begin{eqnarray}
	 \hat S_j	&=& i[\hat V,\partial_t\hat {n}_j]\nonumber
	=i\left[\left(\sum_m V_m \hat a_m^\dag \hat a_m\right),\left(-i\sum_{k}T_{kj}  (\hat a_j^\dag \hat a_k-\hat a_k^\dag \hat a_j)	\right)\right]\nonumber\\
	&=&   
	       \sum_{k} V_j  T_{kj}\hat a_j^\dag\hat a_k
	       +\sum_{k} V_j  T_{kj}\hat a_k^\dag\hat a_j
	-\sum_{k} V_k  T_{kj}\hat a_j^\dag\hat a_k
	       -\sum_{k} V_k  T_{kj}\hat a_k^\dag\hat a_j\nonumber\\
       &=& 
       \sum_k(V_j-V_k) T_{kj}(\hat a_j^\dag\hat a_k+\hat a_k^\dag\hat  a_j )\label{eq:1a}\\
	&=& 
	  \sum_m T_{mj}( \hat a_j^\dag \hat a_m+\hat a_m^\dag \hat a_j) \left(\sum_k\delta_{jk}V_k\right)\nonumber
       	 -\sum_k T_{kj}( \hat a_j^\dag \hat a_k+\hat a_k^\dag \hat a_j) V_k\nonumber\\
	 &=&
       	 \sum_k\left\{-T_{kj}\hat\Gamma_{jk}+\delta_{jk}\sum_m T_{mj}
	\hat \Gamma_{jm}	 
	 \right\}V_k\label{eq:2a}
\end{eqnarray}
So now consider the LHS as vector $\hat S$ with components $\hat S_j=\partial_t^2\hat{n}_j-\hat Q_j$. Similarly consider the potential 
$V$ as a vector with components $V_i$, then we can write  \eqref{eq:2a} as $\hat S=\hat K V$. 
Examining \eqref{eq:1a}, if $V_k=V_{k'}$ for all $k,k'$ then the RHS of \eqref{eq:1a} vanishes.  Hence, $K$ always has at least one vector in the null space, namely the spatially constant potential.

Farzanehpour and Tokatly \cite{Farzanehpour12} study the existence of a unique solution for the non-linear Schr\"odinger equation which follows from \eqref{eq:2a}: 
\begin{equation}
	\partial_t\Phi=-i( \hat H_0+\hat V^{KS})=-i(\hat  H_0- \hat K(\Phi)^{-1}\hat S)\Phi=\hat F(\Phi)\;.
	\label{eq:NLSE}
\end{equation} 
In the space where $\hat K$ has only one zero eigenvalue, the Picard-Lindel\"of theorem~[M. E. Lindel\"of, C. R. Hebd. Sances Acad. Sci. \textbf{116}, 454 (1894)] guarantees the existence of a unique solution.

The Picard-Lindel\"of theorem concerns the differential equation $\partial_t\, y(t)=f(t,y(t))$ with initial value $y(t_0)$ on $t\in[t_0-\varepsilon,t_0+\varepsilon]$.  If $f$ is bounded above by a constant and is continuous in $t$ and Lipschitz continuous in $y$  then, according to the theorem, for $\varepsilon>0$, there exists a unique solution $y(t)$ on $[t_0-\varepsilon,t_0+\varepsilon]$. This solution can be extended until either $y$ becomes unbounded or $y$ is no longer a solution.
The conditions of the theorem are satisfied because $\hat K(\Phi)$ and $\hat S$ are quadratic in $\Phi$, the RHS is Lipschitz continuous in $\Phi$ in the domain where $\hat K$ has only one zero eigenvalue, and the continuity of $\hat K$ and $\hat S$ in time follows immediately from the continuity of $\Phi$.

A nice connection of \eqref{eq:2a} to master equations in probabilistic processes can be drawn.  In  \eqref{eq:2a}, $\hat K$ has the form of a master equation for a probability distribution $P$,
\begin{eqnarray}
\partial_t P_n(t)&=&\sum_{n'} w_{nn'}P_{n'}(t)-w_{n'n}P_n(t)\\
&=&\sum_{n'}\left(w_{nn'}-\delta_{nn'}\sum_mw_{mn} \right)P_{n'}\phantom{spc}
\end{eqnarray}
with 
\begin{equation}
	w_{nn'}=-T_{nn'}\bra{\Phi(t)}( \hat a_n^\dag \hat a_{n'} +\hat a_{n'}^\dag\hat  a_n)\ket{\Phi(t)}.
	\label{eq:W}
\end{equation}
The key difference is that the entries of $K$ are not strictly positive ($\bra{\Phi(t)}\hat a_i^\dag \hat a_j\ket{\Phi(t)}$ can be positive or negative).  Since $K$ is Hermitian and its null space contains the uniform state, if all transition coefficients  were positive, then $K$ would satisfy detailed balance.

\section{Lemmas}\label{C}
\begin{lem}\label{lem:unitary} For two time-dependent Hamiltonians $H(t)=H_0+V(t)$ and $\tilde H(t)=H_0+\tilde V(t)$, the error in the evolution from $t_0$ to $t_1$ is bounded as
	\begin{equation}
		\|\Delta U(t_1,t_0)\|_2
		\leq (t_1-t_0)\;
		\max_{t_0\leq s \leq t_1}\left|V(s)-\tilde{V(s)}\right|_{\infty}
	\label{eq:err3}
	\end{equation}
\end{lem}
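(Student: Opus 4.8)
The plan is to write the difference of the two propagators as a single integral via the Duhamel (variation-of-parameters) identity and then bound it factor-by-factor using unitarity. Let $U(t,t_0)$ and $\tilde U(t,t_0)$ denote the evolution operators generated by $H(t)=H_0+V(t)$ and $\tilde H(t)=H_0+\tilde V(t)$, each solving $i\partial_t U=H U$ with $U(t_0,t_0)=\id$. The central object is the interpolating product $U(t_1,s)\tilde U(s,t_0)$, which coincides with $\tilde U(t_1,t_0)$ at $s=t_1$ and with $U(t_1,t_0)$ at $s=t_0$, so that integrating its $s$-derivative across $[t_0,t_1]$ reconstructs exactly $\Delta U(t_1,t_0)$.

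First I would differentiate this product in $s$. Using $\partial_s U(t_1,s)=iU(t_1,s)H(s)$ for the backward (second) argument together with $\partial_s\tilde U(s,t_0)=-i\tilde H(s)\tilde U(s,t_0)$, the common $H_0$ pieces cancel and one is left with
\begin{equation}
\frac{d}{ds}\left[U(t_1,s)\tilde U(s,t_0)\right]=iU(t_1,s)\left[V(s)-\tilde V(s)\right]\tilde U(s,t_0).
\end{equation}
Integrating from $t_0$ to $t_1$ and inserting the boundary values yields the key identity
\begin{equation}
\Delta U(t_1,t_0)=-i\int_{t_0}^{t_1}U(t_1,s)\left[V(s)-\tilde V(s)\right]\tilde U(s,t_0)\,ds.
\end{equation}
I would then take the spectral norm, move it inside the integral by the triangle inequality, and factor the integrand submultiplicatively. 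Because $U(t_1,s)$ and $\tilde U(s,t_0)$ are unitary their 2-norms are both unity, leaving $\|\Delta U(t_1,t_0)\|_2\leq\int_{t_0}^{t_1}\|V(s)-\tilde V(s)\|_2\,ds$. Since the potential is a diagonal (site-local) operator, the remark in the main text gives $\|V(s)-\tilde V(s)\|_2=|V(s)-\tilde V(s)|_\infty$; bounding the integrand by its supremum over $[t_0,t_1]$ and multiplying by the interval length then produces \eqref{eq:err3}.

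The main obstacle is purely bookkeeping: getting the sign and ordering conventions of the Duhamel identity right, especially the derivative $\partial_s U(t_1,s)$ in the backward argument, which carries the opposite sign from the forward evolution and is most safely obtained by differentiating the composition law $U(t_1,s)U(s,t_0)=U(t_1,t_0)$. Everything downstream is routine, since unitarity trivializes the propagator norms and the diagonal-to-$\infty$-norm identity is already recorded in the text. It is worth emphasizing that the resulting bound is uniform in the shared part $H_0$ and requires no smallness of $V$ or $\tilde V$ individually, only control of their difference in the sup norm over the interval.
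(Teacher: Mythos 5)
Your proposal is correct and follows essentially the same route as the paper: a Duhamel/variation-of-parameters identity expressing $\Delta U(t_1,t_0)$ as a single integral over $V(s)-\tilde V(s)$ sandwiched between propagators, followed by unitary invariance of the spectral norm and the identity $\|V\|_2=|V|_\infty$ for diagonal $V$. The only cosmetic difference is that you interpolate via $U(t_1,s)\tilde U(s,t_0)$ while the paper differentiates $\tilde U^\dagger(s,t_0)U(s,t_0)$; the resulting integral representations are the two standard symmetric forms of the same identity.
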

\begin{proof}
%
\begin{eqnarray*}
	U(t_1,t_0)-\tilde{U}({t_1,t_0}) &=& \tilde{U}(t_1,t_0) \left(\tilde{U}^\dag(t_1,t_0) U(t_1,t_0)-1\right)\\
  &=& \tilde U(t_1,t_0) \left(\int_{t_0}^{t_1}\frac d{ds} (\tilde
U^\dag({s,t_0}) U({s,t_0}))ds\right)\\
  &=& -i \tilde U(t_1,t_0) \left(\int_{t_0}^{t_1}\tilde
U^\dag({s,t_0}) (H(s)-\tilde H(s)) U(s,t_0)ds\right) \\
  &=& -i\int_{t_0}^{t_1} \tilde U({t_1,t_0})\tilde U(t_0,s)
(V(s)-\tilde V(s)) U(s,t_0)ds\\
  &=& -i\int_{t_0}^{t_1} \tilde U({t_1,s}) (V(s)-\tilde V(s)) U(s,t_0)ds
\end{eqnarray*}

Using sub-additivity and the unitary invariance of the operator norm
\begin{eqnarray*}
	\|U(t_1,t_0)-\tilde{U}({t_1,t_0})\|_2&\leq&(t_1-t_0)\max_{t_0\leq s\leq t_1}  \|V(s)-\tilde{V}(s)\|_2 
	\end{eqnarray*}
To obtain the statement in \eqref{eq:err3}, recall that for a diagonal matrix, the induced matrix 2-norm is the infinity norm of the corresponding vector of diagonal elements.  Noting that $V$ is diagonal gives $\|V\|_2=|V |_{\infty}$ to complete the proof.
\end{proof}

\begin{lem}\label{lem:linalg}  When we approximate the solution $x$ of $Ax=b$ from the solution, $\tilde x$, of $\tilde A \tilde x=\tilde b$, under the assumption that both $A$ and $\tilde A$ are invertible, the error in $x$ is bounded by
	\begin{eqnarray}
|\Delta x|&\leq& \alpha ( |\Delta b|+\|\Delta A\|\; |x|)
		\label{eq:key}
	\end{eqnarray}
where the vector and matrix norms are compatible (i.e. $|Mb|\leq\|M\||b|$).  
\end{lem}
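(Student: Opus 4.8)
The plan is to treat this as the standard first-order perturbation bound for a linear system, where the entire content is one algebraic rearrangement followed by the triangle inequality and the compatibility of the norms. Since the target bound features the \emph{exact} solution $x$ rather than the approximate $\tilde x$, the guiding principle is to factor the difference $\Delta x = x - \tilde x$ through the approximate matrix $\tilde A$, so that the exact relation $Ax=b$ can be used to eliminate the $x$-dependence in favour of the data perturbations $\Delta A$ and $\Delta b$.

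First I would write $\tilde A\,\Delta x = \tilde A x - \tilde A \tilde x = \tilde A x - \tilde b$, using only the definition of $\tilde x$ as the solution of $\tilde A \tilde x = \tilde b$. Next I would substitute $\tilde A = A - \Delta A$ (with $\Delta A = A - \tilde A$) and invoke $Ax=b$ to get $\tilde A x = A x - \Delta A\, x = b - \Delta A\, x$. Combining these and writing $\Delta b = b - \tilde b$ yields the exact identity $\tilde A\,\Delta x = \Delta b - \Delta A\, x$, whence $\Delta x = \tilde A^{-1}(\Delta b - \Delta A\, x)$ since $\tilde A$ is assumed invertible.

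The final step is to take the vector norm of both sides, apply subadditivity, and then use the compatibility relation $|My|\le \|M\|\,|y|$ twice, giving $|\Delta x| \le \|\tilde A^{-1}\|\,(|\Delta b| + \|\Delta A\|\,|x|)$. This is exactly the claimed inequality, with the identification $\alpha = \|\tilde A^{-1}\|$.

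There is no genuine obstacle here; the only point requiring care is the bookkeeping choice of which matrix to factor out. Factoring through $\tilde A$ is precisely what produces a bound in terms of the exact solution $|x|$ and the approximate inverse norm; had I instead factored through $A$, I would obtain the symmetric bound $|\Delta x| \le \|A^{-1}\|(|\Delta b| + \|\Delta A\|\,|\tilde x|)$ in terms of $\tilde x$. For the downstream application in \eqref{eq:comp} one simply takes $\alpha$ to be the uniform bound $\kappa$ on the relevant inverse norm; since $A$ and $\tilde A$ differ only by the controlled perturbation $\Delta A$, the two inverse norms agree to leading order and the distinction is immaterial once $\Delta A$ is small.
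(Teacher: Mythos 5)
Your proof is correct, and it reaches the stated bound by a slightly different algebraic route than the paper. The paper telescopes $x-\tilde x = A^{-1}b - A^{-1}\tilde b + A^{-1}\tilde b - \tilde A^{-1}\tilde b$ and then invokes the resolvent-type identity $A^{-1}-\tilde A^{-1}=A^{-1}(\tilde A-A)\tilde A^{-1}$, which naturally produces the bound $\|A^{-1}\|\,|\Delta b| + \|A^{-1}\|\,\|\Delta A\|\,|\tilde x|$, i.e.\ with the \emph{approximate} solution $|\tilde x|$ on the right; the final line of the paper's proof then silently replaces $|\tilde x|$ by $|x|$ and mixes in $\|\tilde A^{-1}\|$, which is why $\alpha$ is taken as $\max\{\|A^{-1}\|,\|\tilde A^{-1}\|\}$. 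Your derivation, applying $\tilde A$ to $\Delta x$ to get the exact identity $\tilde A\,\Delta x = \Delta b - \Delta A\,x$, lands directly on the statement as written, with $|x|$ on the right and the single constant $\alpha=\|\tilde A^{-1}\|$; it is the cleaner of the two for this particular form of the lemma, and your closing remark correctly identifies that factoring through $A$ instead yields the $|\tilde x|$ variant. Either version suffices for the application in \eqref{eq:comp}, where $\alpha$ is replaced by the uniform bound $\kappa$ and $|x|$ by $E_L$.
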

\begin{proof}
	Define $\Delta x=x-\tilde x$ and similarly for $\Delta A$ and $\Delta b$.
	\begin{eqnarray*}
		|x-\tilde x|&=&  |A^{-1}b-A^{-1}\tilde b+A^{-1}\tilde b-\tilde A^{-1} \tilde b|\\
	 		&\leq& |A^{-1}\Delta b|+|(A^{-1}-\tilde A^{-1})\tilde b|\\
			&=&    |A^{-1}\Delta b|+|( A^{-1} \tilde A-\mathbf{1})\tilde A^{-1}\tilde b|\\
			&=&    |A^{-1}\Delta b|+| A^{-1}(\tilde A- A)\tilde x|\\
			&\leq& \|A^{-1}\|\;|\Delta b|+\|\tilde A^{-1}\|\|\tilde A- A\|\;|x|\\
		|\Delta x|	&\leq&\alpha\left( |\Delta b|+\|\Delta A\|\;|x|\right)
	\end{eqnarray*}
	Here, $\alpha=\max\{ \|A^{-1}\|,\|\tilde A^{-1}\|\}$. 
\end{proof}

\begin{lem}\label{lem:ddot}
Suppose density is measured with maximum error $|\Delta n|_\infty<\delta_n$ and the fourth derivative in time is bounded as $\max| \delta_t^4\Delta n|_\infty<c_4$, we have that
\begin{equation}
	|\Delta \partial_t^2 n|_\infty\leq \sqrt{2 c_4\delta_n}
	\label{eq:err5}
\end{equation}
\end{lem}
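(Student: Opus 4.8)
The plan is to estimate $\partial_t^2 n$ at each lattice site by a symmetric second-order finite difference and to split the resulting error into a truncation part and a measurement-propagation part, then optimize over the difference spacing. Because every quantity is handled site-by-site and the statement is in the $\infty$-norm, it suffices to fix a site $j$ and bound the scalar error $|\partial_t^2 n_j(t) - D_h[\tilde n_j](t)|$, where $D_h[f](t) = (f(t+h) - 2f(t) + f(t-h))/h^2$ is the centered second difference with spacing $h$ and $\tilde n_j$ is the measured density; taking the maximum over $j$ at the end recovers the $\infty$-norm claim. The key conceptual point is that $h$ is a free parameter of the classical post-processing, not the simulation time step, so it may be tuned against $\delta_n$ and $c_4$.

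First I would isolate the truncation error of the \emph{exact} density. Taylor expanding $n_j(t \pm h)$ to fourth order, the first- and third-order terms cancel under symmetrization, giving $D_h[n_j](t) - \partial_t^2 n_j(t) = \frac{h^2}{24}\left(\partial_t^4 n_j(\xi_+) + \partial_t^4 n_j(\xi_-)\right)$ for intermediate points $\xi_\pm$. The hypothesis $|\partial_t^4 n|_\infty < c_4$ then bounds this by $c_4 h^2/12$. Next I would propagate the measurement error: writing $\tilde n_j = n_j + \Delta n_j$ with $|\Delta n_j| < \delta_n$ at every time, linearity of $D_h$ gives $D_h[\tilde n_j] - D_h[n_j] = D_h[\Delta n_j]$, and the triangle inequality yields $|D_h[\Delta n_j](t)| \leq \left(|\Delta n_j(t+h)| + 2|\Delta n_j(t)| + |\Delta n_j(t-h)|\right)/h^2 \leq 4\delta_n/h^2$.

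Combining the two contributions gives $|\Delta \partial_t^2 n_j(t)| \leq c_4 h^2/12 + 4\delta_n/h^2$, and the only genuine step is to choose $h$ to minimize the right-hand side. By AM-GM the two terms balance at $h^4 = 48\delta_n/c_4$, where the bound attains its minimum value $\frac{2}{\sqrt{3}}\sqrt{c_4\delta_n}$. Since $\frac{2}{\sqrt{3}} \leq \sqrt{2}$, this is dominated by $\sqrt{2 c_4 \delta_n}$, which is exactly the claimed bound; maximizing over the site index $j$ completes the argument.

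The main obstacle here is not conceptual but a matter of careful bookkeeping: one must ensure $n$ is $C^4$ near $t$ so that the Taylor remainder is legitimate, which the bounded-fourth-derivative assumption supplies. The comfortable slack between $\frac{2}{\sqrt{3}}$ and $\sqrt{2}$ means that even cruder constants in either the truncation estimate or the error-propagation count would leave the stated conclusion intact, so I expect no delicate estimate to be required.
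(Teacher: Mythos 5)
Your proof is correct and follows essentially the same route as the paper: a centered three-point second difference, a fourth-order Taylor truncation bound of $c_4h^2/12$, a measurement-propagation bound of $4\delta_n/h^2$, and optimization at $h^4=48\delta_n/c_4$ yielding $\tfrac{2}{\sqrt{3}}\sqrt{c_4\delta_n}\le\sqrt{2c_4\delta_n}$. No substantive differences to note.
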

\begin{proof}
	We utilize the three point stencil to estimate the second derivative by Taylor expanding to third order
\begin{eqnarray*}
	f(t\pm h)&=& f(t)\pm \partial_t f (t) h +\frac12 \partial_t^2 f (t) h^2 +\pm\frac16  \partial_t^3 f (t) h^3+ R_3(t\pm h)\\
	R_3(t\pm h)&=& \frac{f^{(4)}(\xi)}{4!}h^4,\quad \textrm{for some }\xi \in [t,t\pm h] \\
	\partial_t^2 f(t)&=& \frac{f(t+h)-2f(t)+f(t-h)}{h^2}+\frac{R_3(t-h)+R_3(t+h)}{h^2}\\
	\left| \partial_t^2 f(t)-\partial_t^2 f^{3pt}\right|&\leq&\frac{f^{(4)}(\xi_1)+f^{(4)}(\xi_2)}{4!} h^2\leq\frac{ c_4h^2}{12}
\end{eqnarray*}
where $c_4$ is a bound for the fourth derivative of the function $f$.
\vskip1em
If $\delta_n$ is the maximum absolute difference between any component of the given density and the true density ($\infty$-norm of the difference) then from the triangle inequality,
\begin{eqnarray*}
|\partial_t^2 n(t)-\partial_t^2\tilde{{n}}(t)|_\infty&\leq&|\partial_t^2 n(t) -\partial_t^2 n(t)^{3pt}|_\infty+|\partial_t^2 n(t)^{3pt} -{\partial_t^2\tilde { n}}(t)|_\infty\\
&\leq&\frac{c_4}{12}h^2+\left|\frac{[n({t-h})-\tilde n({t-h})]-2[n({t})-\tilde n_(t)]+[n(t+h)-\tilde n(t+h)] }{h^2}\right|_\infty\\
|\Delta \partial_t^2 n|_\infty&\leq &\frac{c_4 h^2}{12}+\frac{4\delta_n}{h^2}
\end{eqnarray*}
To get the best bound, select $h^2=\sqrt{48 \delta_N/c_4}$. Substituting this into the previous equation gives,
\begin{equation}
	|\Delta \partial_t^2 n|_\infty\leq 
	\left(\frac{\sqrt{48}}{12}+\frac4{\sqrt{48}}	\right)\sqrt{\delta_n c_4}< \sqrt{2}\sqrt{\delta_nc_4}
\end{equation}

\end{proof}

\end{document}